\definecolor{hyperblue}{RGB}{0,0,88}
\newcommand{\refmainone}{\hyperref[nnr:thm:main]{Theorem~\ref*{nnr:thm:main}(1)}\xspace}
\newcommand{\refmaintwo}{\hyperref[nnr:thm:main]{Theorem~\ref*{nnr:thm:main}(2)}\xspace}
\newcommand{\refmainthree}{\hyperref[nnr:thm:main]{Theorem~\ref*{nnr:thm:main}(3)}\xspace}
\newtheorem{theorem}{Theorem}
\newtheorem{lemma}{Lemma}
\newtheorem{observation}{Observation}
\newtheorem{conjecture}{Conjecture}
\newtheorem{definition}{Definition}
\newcommand{\UP}{{\rm UP}}
\newcommand{\NP}{{\rm NP}}
\newcommand{\LWPP}{{\rm LWPP}}
\newcommand{\parityP}{\oplus{\rm P}}
\newcommand{\rank}{\operatorname{rank}}
\newcommand{\nnr}{\operatorname{rank}_+}
\newcommand{\binr}{\operatorname{rank}_{0,1}}
\newcommand{\fool}{\operatorname{fool}}
\begin{document}

\title{Nonnegative Rank vs.~Binary Rank}
\author{Thomas Watson\footnote{Department of Computer Science, University of Toronto. Supported by funding from NSERC.}}
\maketitle

\begin{abstract}
Motivated by (and using tools from) communication complexity, we investigate the relationship between the following two ranks of a $0$-$1$ matrix: its nonnegative rank and its binary rank (the $\log$ of the latter being the unambiguous nondeterministic communication complexity). We prove that for partial $0$-$1$ matrices, there can be an exponential separation. For total $0$-$1$ matrices, we show that if the nonnegative rank is at most $3$ then the two ranks are equal, and we show a separation by exhibiting a matrix with nonnegative rank $4$ and binary rank $5$, as well as a family of matrices for which the binary rank is $4/3$ times the nonnegative rank.
\end{abstract}

%%%%%%%%%%%%%%%%%%%%%%%%%%%%%%%%%%%%%%%%%%%%%%%%%%%%%%%%%%%%%%%%%%%%%%%%%%%%%%
%%%%%%%%%%%%%%%%%%%%%%%%%%%%%%%%%%%%%%%%%%%%%%%%%%%%%%%%%%%%%%%%%%%%%%%%%%%%%%

\section{Introduction}
\label{nnr:sec:intro}

Let $M$ be a total $0$-$1$ matrix of size $n\times m$. We consider three notions of the rank of $M$ (over the reals).
\begin{itemize}
\item[(i)] The familiar \emph{real rank}, denoted $\rank(M)$, is the smallest $r$ for which there exists a \emph{product} decomposition $M=UV$ where $U$ is $n\times r$ and $V$ is $r\times m$. Equivalently, it is the smallest $r$ for which there exists a \emph{sum} decomposition $M=M^{(1)}+\cdots+M^{(r)}$ where each component is rank $1$, say $M^{(i)}=u^{(i)}v^{(i)}$ where $u^{(i)}$ is $n\times 1$ and $v^{(i)}$ is $1\times m$.
\item[(ii)] The \emph{nonnegative rank}, denoted $\nnr(M)$, is defined in the same way but with the restriction that $U$ and $V$ are entry-wise nonnegative, or equivalently that all $u^{(i)},v^{(i)}$ (and hence $M^{(i)}$) are entry-wise nonnegative.
\item[(iii)] The \emph{binary rank}, denoted $\binr(M)$, is defined in the same way but with the further restriction that all entries of $U$ and $V$ come from $\{0,1\}$, or equivalently that all entries of $u^{(i)},v^{(i)}$ (and hence $M^{(i)}$) come from $\{0,1\}$.
\end{itemize}
We define a \emph{partial} $0$-$1$ matrix as having entries from $\{0,1,*\}$, where $*$ entries are wildcards representing arbitrary numbers (from some set depending on the context; see details below). The real, nonnegative, or binary rank of a partial $0$-$1$ matrix $M$ are defined (respectively) as the minimum real, nonnegative, or binary rank of any total real matrix that agrees with $M$ on the non-$*$ entries. In the case of nonnegative rank, the total matrix must have nonnegative entries, and in the case of binary rank, the total matrix must have nonnegative integer entries (but in any case, its entries corresponding to $*$ need not come from $\{0,1\}$). Note that for all $0$-$1$ matrices $M$ (total or partial), $\rank(M)\le\nnr(M)\le\binr(M)$.

These ranks are closely related to measures of communication complexity of the two-party function associated with $M$. (See \cite{KN,Juk} for background on communication complexity.) The binary rank is just the number of (combinatorial) rectangles needed to partition the $1$'s of $M$. This corresponds to unambiguous nondeterminism (the communication complexity analogue of the classical complexity class $\UP$), so the unambiguous nondeterministic communication complexity\footnote{This is closely related to the clique vs.~independent set family of problems, since it is well-known that every two-party total function reduces to the clique vs.~independent set problem for a graph on $\binr(M)$ nodes.} is defined to be $\log_2\binr(M)$.\footnote{There are two other types of rank defined in the same way as binary rank but with different arithmetic. With boolean arithmetic (so $+$ is OR) this is sometimes known as \emph{boolean rank}; it is the number of rectangles needed to cover the $1$'s of $M$; it lower bounds the nonnegative rank and corresponds to $\NP$. With $GF(2)$ arithmetic (so $+$ is XOR) this is sometimes known as \emph{xor rank}; it is the number of rectangles needed to cover each $1$ of $M$ an odd number of times and each $0$ of $M$ an even number of times; it lower bounds the real rank and corresponds to $\parityP$.}

Similarly, $\log_2\rank(M)$ can be viewed as a communication measure corresponding to the classical complexity class $\LWPP$ (representing randomized algorithms whose acceptance probability is one common value on all $0$-inputs and a different common value on all $1$-inputs). The famous \emph{Log-Rank Conjecture} asserts that $\log\rank(M)$ is polynomially related to the deterministic communication complexity of $M$, for every total $0$-$1$ matrix $M$.

Analogously to the correspondences between $\log_2\binr$ and $\UP$, and between $\log_2\rank$ and $\LWPP$, we can view $\log_2\nnr$ as corresponding to a classical complexity class representing randomized algorithms whose acceptance probability is $0$ on all $0$-inputs and a positive common value on all $1$-inputs (though this class appears to lack a name). Nonnegative rank of a total $0$-$1$ matrix $M$ also corresponds to the following random sampling problem: Alice and Bob are given private randomness but no input, and their goal is for Alice to output a row index and Bob to output a column index such that the corresponding entry of $M$ is uniformly distributed over all the $1$'s of $M$.\footnote{A closely related type of problem (which was studied in the communication complexity setting in \cite{ASTVW}, and in the setting of constant-depth circuits in \cite{Vio}) is to sample a uniformly random input-output pair of a given function. In our problem we just ask to sample a uniformly random $1$-input, which can be viewed as peripherally motivated by the topic of uniform sampling / approximate counting of $\NP$ witnesses.} It turns out that one-way communication is optimal in this setting, and the nonnegative rank is exactly the minimum number of transcripts needed to solve this problem (see, e.g., \cite{JSWZ}, where it is called ``correlation complexity''), and hence the minimum number of bits of communication is $\log_2\nnr(M)$. If $M$ is a partial $0$-$1$ matrix, then the associated sampling problem is for Alice's row and Bob's column to be uniformly distributed over the $1$ entries of $M$, \emph{conditioned} on being a non-$*$ entry.

We consider the relationship between nonnegative rank and binary rank (hence between the above sampling problem and unambiguous nondeterminism). In terms of motivation, nonnegative rank of real matrices has myriad applications in theoretical computer science, other branches of computer science, and other scientific disciplines (see \cite{CR,Gil}), but there seems to be a dearth of proof techniques that directly exploit special properties of nonnegative rank. For example, there seem to be almost no known general-purpose techniques for upper bounding $\nnr$ without upper bounding $\binr$, or for lower bounding $\binr$ without lower bounding $\nnr$. We provide some (albeit ad hoc) techniques that differentiate between $\nnr$ decompositions and $\binr$ decompositions. Finally, we feel it is always natural to compare different measures of complexity, and from a purely combinatorial point of view it is natural to compare different notions of rank of $0$-$1$ matrices.

It is known that for every total $0$-$1$ matrix $M$,
\begin{align}
\log_2\binr(M)~&\le~\text{deterministic communication complexity of $M$}\nonumber\\
&\le~O\bigl((\log\rank(M))\cdot(\text{nondeterministic communication complexity of $M$})\bigr)\nonumber\\
&\le~O(\log^2\nnr(M))\nonumber\\
&\le~O(\log^2\binr(M))\label{nnr:eq:quadratic}
\end{align}
where the second line follows by a result of \cite{LS}. (A simple proof of the special case ``deterministic communication complexity of $M\le O(\log^2\binr(M))$'' was given earlier in \cite{Yan}.) We show that the quadratic upper bound $\log\binr(M)\le O(\log^2\nnr(M))$ does not hold in general for partial $0$-$1$ matrices; there is an exponential separation. Our proof of this exploits machinery from \cite{GLMWZ,Goos} in a new way and also (perhaps surprisingly) exploits the upper bound \eqref{nnr:eq:quadratic} for total $0$-$1$ matrices. Prior to this work it was open whether nonnegative rank and binary rank are \emph{always equal} for total $0$-$1$ matrices; we refute this, although our counterexamples are very far from resolving whether the quadratic upper bound is tight.

\begin{theorem}
\label{nnr:thm:partialsep}
There exists a family of partial $0$-$1$ matrices $M$ such that $\binr(M)\ge 2^{\nnr(M)^{\Omega(1)}}$.
\end{theorem}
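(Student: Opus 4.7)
The plan is to reduce the problem to a query complexity separation via lifting, combining machinery from \cite{GLMWZ} and \cite{Goos}. First I would pick a gadget $g\colon X\times Y\to\{0,1\}$ (such as indexing or inner product) and a partial Boolean query function $f\colon\{0,1\}^n\to\{0,1,*\}$, and form the lifted partial communication matrix $M=f\circ g^n$ on $X^n\times Y^n$. I would then use the lifting theorem from \cite{GLMWZ} (``rectangles are nonnegative juntas'') to upper bound $\nnr(M)$ by, roughly, the \emph{conical junta complexity} of $f$: the minimum number of conjunctions in a nonnegative combination that equals $f$ on all $\{0,1\}$-inputs. Symmetrically, a lifting theorem in the spirit of \cite{Goos} for unambiguous nondeterminism would lower bound $\binr(M)$ by, roughly, the \emph{unambiguous DNF complexity} of $f$: the minimum number of conjunctions that exactly partition the $1$-inputs of $f$ while covering no $0$-input.

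The task then reduces to exhibiting a partial $f$ with an exponential gap between these two query measures---small conical junta (witnessed by a simple fractional strategy) but large unambiguous DNF (no efficient integer partition of $1$-inputs). Such separations are accessible for partial functions via promise ``find-a-witness'' problems, in which fractional certifications are trivial but any integer-valued certification must pick exactly one witness per $1$-input, leading to many combinatorially distinct terms.

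The quadratic bound $\log\binr(N)\le O(\log^2\nnr(N))$ for \emph{total} matrices enters in a non-obvious way: it forces the query function $f$ to be genuinely partial. Indeed, a total $f$ would lift to a total $M$, and the quadratic bound would then force the unambiguous DNF complexity of $f$ to be at most $\mathrm{poly}$ in its conical junta complexity (once the polynomial losses in the two lifting theorems are absorbed), contradicting any exponential query-world gap. Hence partiality of $f$ (and thus of $M$) is essential, and the total-matrix inequality is invoked precisely to certify that no total shortcut is possible.

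The main obstacle is constructing and analyzing the partial query function $f$. Its conical junta upper bound is typically the easier side, as it suffices to design a simple fractional strategy that samples witnesses. The unambiguous DNF lower bound is the hard part, since it must rule out \emph{every} conceivable integer partition of the $1$-inputs of $f$ into conjunctions; this usually demands a tailored combinatorial argument, perhaps leveraging an underlying graph-theoretic or coding-theoretic structure to force distinct $1$-inputs into distinct conjunctions.
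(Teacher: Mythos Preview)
Your upper bound on $\nnr$ via conical junta complexity is essentially what the paper does: for $f$ with $f(z)=1$ iff $|z|=n/2$ and $f(z)=0$ iff $|z|=0$, one has $f=(2/n)\sum_i z_i$ on its domain, and lifting each literal $z_i$ through $g$ gives $\nnr(f\circ g^n)\le n\cdot 2^b$ directly (\autoref{nnr:lem:upper}).

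The lower bound, however, has a genuine gap. You invoke ``a lifting theorem in the spirit of \cite{Goos} for unambiguous nondeterminism,'' but no such theorem is known; the paper explicitly lists a simulation theorem for unambiguous nondeterminism as an open problem in \autoref{nnr:sec:open}. So reducing $\binr(f\circ g^n)$ to an unambiguous-DNF lower bound on $f$ is not available, and your proposal offers no substitute.

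The paper's actual lower bound (\autoref{nnr:lem:lower}) does not go through query complexity at all. It works directly in the communication world, applying the confounding-gadget density lemma (\autoref{nnr:lem:conf}) twice. First, any two rectangles $R_i,R_j$ in a purported small $\binr$ decomposition must have tiny intersection, since otherwise \autoref{nnr:lem:conf} would place a $1$-input of $f\circ g^n$ in $R_i\cap R_j$, violating unambiguity. After excising these tiny pairwise overlaps one obtains a \emph{total} function $F$ on a large subrectangle, still with small $\binr$. Here the quadratic bound \eqref{nnr:eq:quadratic} is invoked \emph{as a working step of the proof}---not merely, as you write, to certify that $f$ must be partial. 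It converts the small $\binr(F)$ into a small conondeterministic cover of the $0$-inputs of $F$; one rectangle $T$ of that cover is then dense on $W_{0^n}$, and a second application of \autoref{nnr:lem:conf} forces a $1$-input of $f\circ g^n$ inside $T$, a contradiction.

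So the idea you are missing is this ``totalize, then turn the total-case inequality against itself'' maneuver: your plan treats \eqref{nnr:eq:quadratic} as an obstruction to route around, whereas the proof uses it as the engine.
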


\begin{theorem}
\label{nnr:thm:main}
~\begin{itemize}
\item[(1)] For every total $0$-$1$ matrix $M$, if $\nnr(M)\le 3$ then $\nnr(M)=\binr(M)$.
\item[(2)] There exists a total $0$-$1$ matrix $M$ such that $\nnr(M)=4$ and $\binr(M)=5$.
\item[(3)] There exists a total $0$-$1$ matrix $M$ such that $\nnr(M)=9$ and $\binr(M)=12$.
\end{itemize}
\end{theorem}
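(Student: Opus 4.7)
My plan is to handle the three parts separately, with part~(1) requiring the most delicate argument.

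For part~(1), I will do case analysis on $r = \nnr(M) \in \{1,2,3\}$. The case $r=1$ is immediate since a nonnegative rank-one decomposition of a $0$-$1$ matrix has support equal to a single combinatorial rectangle, which is already a one-rectangle partition. For $r=2$, given $M = M_1 + M_2$ with supports $R_i = A_i \times B_i$, the crucial step is to rule out the ``cross'' configuration in which $R_1 \cap R_2 \neq \emptyset$ yet $A_1 \not\subseteq A_2$ and $B_1 \not\subseteq B_2$: choosing $i \in A_1 \setminus A_2$, $j \in B_1 \cap B_2$, $i' \in A_1 \cap A_2$, and $j' \in B_1 \cap B_2$, the rank-one structure of $M_1$ together with $M_2 = 0$ on cells with $i \notin A_2$ propagates the value $1$ to $(i',j')$, so $M_2$ must vanish there, contradicting $(i',j') \in R_2$. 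A symmetric argument handles the mirror orientation, so the only remaining case is a nested/L-configuration (e.g.\ $A_1 \subseteq A_2$ and $B_2 \subseteq B_1$), in which $R_1 \cup R_2$ partitions into two disjoint rectangles $A_1 \times B_1$ and $(A_2 \setminus A_1) \times B_2$, giving $\binr \le 2$. For $r=3$ I will extend this by case analysis on the intersection graph of $R_1, R_2, R_3$: when at most one pair overlaps, the result reduces to the $r=2$ case; the remaining cases (one rectangle overlapping the other two, or all three pairwise overlapping) are handled by analogous algebraic propagation arguments showing that the supports can be re-partitioned into at most three disjoint rectangles.

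For part~(2), I will exhibit a small explicit $0$-$1$ matrix $M$ and verify both bounds: the upper bound $\nnr(M) \le 4$ by writing down an explicit nonnegative rank-one decomposition with $4$ terms, and the lower bound $\binr(M) \ge 5$ via a fooling set of size $5$ (cells $(i_k,j_k)$ with $M(i_k,j_k)=1$ such that for every pair $k \neq \ell$ at least one of $M(i_k,j_\ell), M(i_\ell,j_k)$ is $0$, preventing any two from lying in the same rectangle of a partition). I would search for candidates among matrices of size roughly $5\times 5$ to $7 \times 7$, possibly derived from adjacency or non-adjacency patterns of small graphs. Part~(3) follows the same template with a larger construction, likely coming from a specific combinatorial design or graph whose structure yields a nonnegative rank-$9$ decomposition matched by an obstructing lower bound (a fooling set or LP-duality argument) of size $12$; the integrality of the ratio $12/9 = 4/3$ suggests the construction can be scaled into an infinite family by a tensoring or direct-sum operation.

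The main obstacle is the case analysis for $r=3$ in part~(1): whereas the $r=2$ case has essentially one nontrivial geometric possibility for overlapping supports, three overlapping rectangles admit substantially more configurations, and in each one must show that the algebraic constraints imposed by the $0$-$1$ and rank-one conditions either eliminate the configuration or permit a partition into three rectangles. A secondary obstacle is certifying tight $\binr$ lower bounds in parts~(2) and~(3) without resorting to brute-force enumeration: finding a fooling set of the exact needed size, or proving none larger exists, may require either a clever choice of the underlying combinatorial object or a careful LP-based argument.
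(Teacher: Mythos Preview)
Your plan for parts (2) and (3) contains a fatal error: a fooling set lower-bounds not only $\binr(M)$ but also $\nnr(M)$. Indeed, if $M=\sum_{i=1}^r M^{(i)}$ is any nonnegative rank-one decomposition, the supports $R_i$ are rectangles covering all $1$'s of $M$, and no rectangle can contain two elements of a fooling set; hence the fooling number is at most $r=\nnr(M)$. So there is no $0$-$1$ matrix with $\nnr(M)=4$ and a fooling set of size $5$, and your proposed certificate for $\binr(M)\ge 5$ (respectively $\ge 12$) cannot exist. The paper's examples have fooling number exactly $4$ and $9$, matching $\nnr$; the lower bounds on $\binr$ are obtained by ad hoc combinatorial arguments that go beyond any fooling-set or LP-type bound (e.g., for part~(2), partitioning the non-bottom-row $1$'s into three fooling sets of size $4$ and arguing about how a hypothetical $4$-rectangle partition must interact with the bottom row). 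You will need a genuinely different lower-bound technique here.

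For part~(1), your $r=2$ sketch is essentially right in spirit, though as written it has a slip: you place both $j$ and $j'$ in $B_1\cap B_2$, which does not use the hypothesis $B_1\not\subseteq B_2$; you need $j'\in B_1\setminus B_2$ to force $M_1(i',j')=1$ and then propagate. More seriously, your $r=3$ plan of ``analogous algebraic propagation'' by cases on the intersection graph is too optimistic. The paper exhibits explicit examples showing that (a) there exist $\nnr=3$ decompositions in which no two supports form a compatible pair, so the greedy/peeling Technique~1 fails, and (b) there exist $\binr=3$ matrices with four distinct nonzero rows and four distinct nonzero columns, so the row/column-type Technique~2 also fails in isolation. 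The actual proof must interleave both techniques and, in the hardest subcase, invoke the numerical constraint $M^{(1)}_{c,b}+M^{(2)}_{c,b}+M^{(3)}_{c,b}=1=M^{(1)}_{c,d}+M^{(2)}_{c,d}+M^{(3)}_{c,d}$ (not just support information) to derive a contradiction. Your outline does not yet identify this step.
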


Tensoring the matrix from \refmainthree with the identity shows that for every $k$ divisible by $9$ there exists a total $0$-$1$ matrix with nonnegative rank $k$ and binary rank $(4/3)k$. After taking logs, this yields an extremely meager gap between the sampling and unambiguous nondeterminism complexity measures (an additive constant less than $1$, which would sometimes be wiped out by taking ceilings, anyway). The primary open question is whether a better gap can be exhibited. A natural approach is to amplify the gap by repeatedly tensoring a matrix with itself. Lower bounding the binary rank of the resulting matrix seems related to the notoriously difficult direct sum problem for deterministic communication complexity (since the $k$-fold tensor product corresponds to taking the AND of the outputs of $k$ independent instances of the original two-party function, which is no harder than computing the values of all $k$ outputs). We have been unable to prove any interesting such lower bounds, even for the concrete matrices from \refmaintwo and \refmainthree. This is further discussed in \autoref{nnr:sec:open}.

There are several other works investigating the relationships of different ranks with each other and with measures of communication complexity (for total matrices). G\"o\"os, Pitassi, and Watson \cite{GPW} exhibited a nearly quadratic (essentially tight) gap between deterministic communication complexity and the log of the binary rank, as well as a nearly power $1.5$ gap between deterministic communication complexity and the log of the number of monochromatic rectangles needed to partition the $0$'s and $1$'s of the matrix (i.e., the log of the sum of the binary rank and the binary rank of the complement matrix). That power $1.5$ gap (which has subsequently been improved to quadratic by Ambainis, Kokainis, and Kothari \cite{AKK}) improves the factor $2$ gap due to Kushilevitz, Linial, and Ostrovsky \cite{KLO}, which was also the previous record for deterministic communication complexity vs.~log of the binary rank (and we mention that for the latter, Kushilevitz and Weinreb \cite{KW} had used different techniques to obtain a weaker factor $1.12$ gap). For deterministic communication complexity vs.~log of the real rank, the best gap before \cite{GPW} was a power $\log_36\approx 1.63$ gap due to Nisan, Wigderson, and Kushilevitz \cite{NW}; in the other direction, Lovett \cite{Lov} showed that the deterministic communication complexity of any $M$ is $O\bigl(\sqrt{\rank(M)}\log\rank(M)\bigr)$. G\"o\"os \cite{Goos} exhibited matrices for which the conondeterministic communication complexity is at least a power $1.128$ greater than the log of the binary rank, improving on the factor $2$ gap due to Shigeta and Amano \cite{SA}. Shitov \cite{Shi} showed that for every $n\times m$ real nonnegative matrix with real rank $3$, the nonnegative rank is at most $\bigl\lceil\frac{6}{7}\cdot\min(n,m)\bigr\rceil$.

%%%%%%%%%%%%%%%%%%%%%%%%%%%%%%%%%%%%%%%%%%%%%%%%%%%%%%%%%%%%%%%%%%%%%%%%%%%%%%
%%%%%%%%%%%%%%%%%%%%%%%%%%%%%%%%%%%%%%%%%%%%%%%%%%%%%%%%%%%%%%%%%%%%%%%%%%%%%%

\section{Partial Matrices}
\label{nnr:sec:partial}

In this section we prove \autoref{nnr:thm:partialsep}. Letting $g\colon\{0,1\}^b\times\{0,1\}^b\to\{0,1\}$ be a total function (usually called a \emph{gadget}), define $g^n\colon\{0,1\}^{bn}\times\{0,1\}^{bn}\to\{0,1\}^n$ by $g^n(x,y)=\bigl(g(x_1,y_1),\ldots,g(x_n,y_n)\bigr)$ where $x=(x_1,\ldots,x_n)$ and $y=(y_1,\ldots,y_n)$ with $x_i,y_i\in\{0,1\}^b$ for each $i$. For any (possibly partial) function $f\colon\{0,1\}^n\to\{0,1\}$, the (possibly partial) two-party communication problem $f\circ g^n\colon\{0,1\}^{bn}\times\{0,1\}^{bn}\to\{0,1\}$ is defined by $(f\circ g^n)(x,y)=f\bigl(g(x_1,y_1),\ldots,g(x_n,y_n)\bigr)$, and we identify this with a $2^{bn}\times 2^{bn}$ partial $0$-$1$ matrix with rows indexed by $x$ and columns indexed by $y$.

Consider the partial function $f\colon\{0,1\}^n\to\{0,1\}$ (where $n$ is even) defined by
\begin{equation}
\label{nnr:eq:f}
f(z)~=~\begin{cases}1&\text{if }|z|=n/2\\0&\text{if }|z|=0\end{cases}
\end{equation}
where $|z|$ denotes the Hamming weight of $z$.

\begin{lemma}
\label{nnr:lem:upper}
For $f$ as in \eqref{nnr:eq:f} and for all $g\colon\{0,1\}^b\times\{0,1\}^b\to\{0,1\}$, we have $\nnr(f\circ g^n)\le n\cdot 2^b$.
\end{lemma}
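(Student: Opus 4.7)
The plan is to give an explicit nonnegative decomposition of a total nonnegative completion $N$ of $f\circ g^n$ using $n\cdot 2^b$ rank-$1$ summands. The starting observation is that the non-$*$ entries depend on $(x,y)$ only through $|S(x,y)|$, where $S(x,y):=\{i\in[n]:g(x_i,y_i)=1\}$: the value is $1$ exactly when $|S(x,y)|=n/2$ and $0$ exactly when $|S(x,y)|=0$. Any total nonnegative $N$ matching these two cases is therefore a valid completion, since the definition of $\nnr$ for partial matrices places no constraint on $*$-positions beyond nonnegativity.

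The natural candidate is
\begin{equation*}
N(x,y)~:=~\tfrac{2}{n}\cdot|S(x,y)|,
\end{equation*}
which is $0$ when $S(x,y)=\emptyset$ and $1$ when $|S(x,y)|=n/2$. To bound $\nnr(N)$, I would expand
\begin{equation*}
|S(x,y)|~=~\sum_{i=1}^n[g(x_i,y_i)=1]~=~\sum_{i=1}^n\sum_{a\in\{0,1\}^b}[x_i=a]\cdot[g(a,y_i)=1],
\end{equation*}
so that $N(x,y)=\sum_{i,a}\tfrac{2}{n}[x_i=a][g(a,y_i)=1]$. Each of the $n\cdot 2^b$ summands is a rank-$1$ nonnegative matrix whose row factor $[x_i=a]$ depends only on $x_i$ and whose column factor $[g(a,y_i)=1]$ depends only on $y_i$, yielding $\nnr(N)\le n\cdot 2^b$ and hence $\nnr(f\circ g^n)\le n\cdot 2^b$.

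I do not anticipate any genuine obstacle; the argument is pure combinatorial bookkeeping. The only point worth flagging is that $N$ may exceed $1$ on $*$-entries (reaching up to $2$), which is fine because the paper's definition of $\nnr$ for partial matrices explicitly allows $*$-entries of the completion to take arbitrary nonnegative real values. The bound factors as ``$n$ coordinates'' times ``$2^b$ settings of $x_i$,'' reflecting that for each coordinate $i$ we are paying the generic cost of writing the indicator of $g(x_i,y_i)=1$ as a nonnegative rank-$1$ sum; any sharper bound would have to exploit structure of the gadget $g$.
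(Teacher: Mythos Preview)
Your proposal is correct and is essentially identical to the paper's proof: the paper defines $M^{(i,s)}(x,y)=\tfrac{2}{n}\cdot[x_i=s]\cdot[g(x_i,y_i)=1]$ for $i\in[n]$ and $s\in\{0,1\}^b$, which is exactly your summand $\tfrac{2}{n}[x_i=a][g(a,y_i)=1]$ under the renaming $s\leftrightarrow a$. Your explicit identification of the completion $N(x,y)=\tfrac{2}{n}|S(x,y)|$ and your remark that it may exceed $1$ on $*$-entries are nice clarifications that the paper leaves implicit.
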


\begin{proof}
For $i\in\{1,\ldots,n\}$ and $s\in\{0,1\}^b$, define the total function (matrix) $M^{(i,s)}\colon\{0,1\}^{bn}\times\{0,1\}^{bn}\to\mathbb{R}_{\ge 0}$ by \[M^{(i,s)}(x,y)~=~\begin{cases}2/n&\text{if $x_i=s$ and $g(x_i,y_i)=1$}\\0&\text{otherwise}\end{cases}.\] Since each $M^{(i,s)}$ is nonnegative and has rank $1$, it suffices to check that $\sum_{i,s}M^{(i,s)}$ agrees with $f\circ g^n$ on the latter function's domain. If $(x,y)$ is such that $|g^n(x,y)|=0$ then $M^{(i,s)}(x,y)=0$ for all $(i,s)$. If $(x,y)$ is such that $|g^n(x,y)|=n/2$ then $M^{(i,s)}(x,y)=2/n$ for exactly $n/2$ many $(i,s)$; specifically, there are $n/2$ many $i$'s for which $g(x_i,y_i)=1$, and for each such $i$ there is exactly one $s$ for which $x_i=s$.
\end{proof}

The paper \cite{GLMWZ} introduced and studied what we call the \emph{confounding} gadget $g\colon\{0,1\}^b\times\{0,1\}^b\to\{0,1\}$, defined by
\begin{equation}
\label{nnr:eq:g}
g(x_i,y_i)~=~\langle x_i,y_i\rangle\bmod{2}\qquad\text{and}\qquad b~=~b(n)~=~100\log_2n.
\end{equation}

\begin{lemma}
\label{nnr:lem:lower}
For $f$ as in \eqref{nnr:eq:f} and $g$ as in \eqref{nnr:eq:g}, we have $\binr(f\circ g^n)\ge 2^{\Omega(\sqrt{n\log n})}$.
\end{lemma}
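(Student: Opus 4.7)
\emph{Proof plan.} The plan is to combine (a) a deterministic communication complexity lower bound for $f\circ g^n$, obtained via a query-to-communication lifting theorem for the confounding gadget, with (b) a ``reverse'' application of the upper bound \eqref{nnr:eq:quadratic} to an arbitrary total completion of the partial matrix. The introduction already signals this route by noting that the proof ``exploits machinery from \cite{GLMWZ,Goos} in a new way and also (perhaps surprisingly) exploits the upper bound \eqref{nnr:eq:quadratic} for total $0$-$1$ matrices''. The main obstacle is step (a); once it is in hand, step (b) is essentially a one-line accounting.

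For step (a), I first observe that the partial function $f$ of \eqref{nnr:eq:f} has deterministic decision-tree complexity at least $n/2+1$: any decision tree solving $f$ must, on the all-zero input, read at least $n/2+1$ bits before it can rule out the existence of a weight-$n/2$ consistent extension. Then I would invoke (or, if necessary, adapt inside the proof) a query-to-communication lifting theorem for the inner-product gadget in the spirit of \cite{GLMWZ,Goos}: for any partial $f\colon\{0,1\}^n\to\{0,1,*\}$ and the confounding gadget $g$ of \eqref{nnr:eq:g}, the deterministic communication complexity of $f\circ g^n$ (as a partial matrix) is at least $\Omega(D(f)\cdot b)$, where $D(f)$ denotes the deterministic decision-tree complexity of $f$. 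Combining these two ingredients gives that the deterministic communication complexity of $f\circ g^n$ is $\Omega(n\log n)$.

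For step (b), let $\tilde M$ be any total $0$-$1$ matrix that completes $M:=f\circ g^n$ and achieves $\binr(\tilde M)=\binr(M)$; such a $\tilde M$ exists by the definition of $\binr$ on partial matrices. Any deterministic protocol for $\tilde M$ also solves the partial problem $M$ (it simply outputs the right value on every non-$*$ entry), so the deterministic communication complexity of $\tilde M$ is also $\Omega(n\log n)$. Now apply the total-matrix bound from \eqref{nnr:eq:quadratic}---which says that the deterministic communication complexity of $\tilde M$ is $O(\log^2\binr(\tilde M))$---to conclude $\binr(M)=\binr(\tilde M)\ge 2^{\Omega(\sqrt{n\log n})}$. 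The ``surprise'' flavor lies exactly in this reverse use of \eqref{nnr:eq:quadratic}: a chain whose usual role is an \emph{upper} bound on deterministic communication complexity is being used here as a \emph{lower} bound on $\binr$. The only remaining technical worry is whether the lifting theorem in step~(a) is available off-the-shelf from \cite{GLMWZ,Goos} for partial $f$ at the specific block size $b=100\log_2 n$, or whether a new derivation inside the proof is needed; this is where I expect most of the effort to go.
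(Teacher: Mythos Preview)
Your step~(b) contains a genuine gap: it is \emph{not} true that a total $0$-$1$ completion $\tilde M$ with $\binr(\tilde M)=\binr(M)$ must exist. By the paper's definition, $\binr$ of a partial matrix minimizes over completions with \emph{nonnegative integer} entries, and the optimum can force entries ${>}1$ on the $*$-positions. Concretely, for
\[
\begin{pmatrix} 1 & 1 & 0 \\ 1 & * & 1 \\ 0 & 1 & 1 \end{pmatrix}
\]
the two rectangles $\{1,2\}\times\{1,2\}$ and $\{2,3\}\times\{2,3\}$ witness $\binr=2$ (the $*$ gets covered twice), whereas each of the two $0$-$1$ completions has $\binr=3$. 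Since the chain~\eqref{nnr:eq:quadratic} is stated only for total $0$-$1$ matrices, you cannot apply it to the integer-valued optimal completion; and if you pass to a $0$-$1$ completion you lose all control over $\binr(\tilde M)$. Put differently, the rectangles in an optimal $\binr$-decomposition of the partial matrix may overlap on $*$-entries, and this is exactly what blocks the one-line reduction to the total case that you propose.

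The paper's proof confronts precisely this obstruction, and it does not use a deterministic lifting theorem at all---so your step~(a) is neither used nor needed. Instead, \autoref{nnr:lem:conf} is applied twice. First, any two rectangles $R_i,R_j$ of a putative small unambiguous cover can overlap on at most a $2^{-\Omega(n\log n)}$ fraction of inputs (a larger overlap would, via \autoref{nnr:lem:conf}, contain a $1$-input of $f\circ g^n$, contradicting unambiguity). After deleting a tiny rectangle $\bigcup_{i<j}Q_{i,j}$ containing all overlaps, the restricted sets $R_i\cap S$ are genuinely disjoint, so their indicator defines a \emph{total} $0$-$1$ function $F$ on $S$ with small $\binr(F)$; now \eqref{nnr:eq:quadratic} applies legitimately. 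The resulting small conondeterministic protocol yields a large $F{=}0$ rectangle $T$ inside $W_{0^n}$, and a second application of \autoref{nnr:lem:conf} forces $T$ to contain a $1$-input of $f\circ g^n$, the desired contradiction. Thus the real work is the ``disjointify, then apply \eqref{nnr:eq:quadratic}'' maneuver together with the second use of the gadget lemma---both of which are missing from your plan.
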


\autoref{nnr:thm:partialsep} follows immediately from \autoref{nnr:lem:upper} (which shows that $\nnr(f\circ g^n)\le n^{101}$ for the confounding gadget $g$) and \autoref{nnr:lem:lower}. We conjecture that when $g$ is the AND gadget (with $b=1$), $\binr(f\circ g^n)\ge 2^{\Omega(n)}$, but we are unable to prove this.

To prove \autoref{nnr:lem:lower}, we use the following property of the confounding gadget, which is a special case of the main technical result in \cite{GLMWZ} (and a streamlined, self-contained proof of this special case appears in \cite{Goos}). A \emph{subcube} of codimension $d$ is defined as the subset of $\{0,1\}^n$ consisting of all $2^{n-d}$ strings consistent with some partial assignment that fixes $d$ of the bit positions. For $z\in\{0,1\}^n$, we also define the set $W_z\subseteq\{0,1\}^{bn}\times\{0,1\}^{bn}$ to be $(g^n)^{-1}(z)$.

\begin{lemma}
\label{nnr:lem:conf}
For $g$ as in \eqref{nnr:eq:g}, for every rectangle $X\times Y\subseteq\{0,1\}^{bn}\times\{0,1\}^{bn}$ and every $z\in\{0,1\}^n$, if $|(X\times Y)\cap W_z|\ge 2^{-c}\cdot|W_z|$, then there exists a subcube $Z\subseteq\{0,1\}^n$ of codimension $O(c/b)=O(c/\log n)$ such that $Z\subseteq g^n(X,Y)$ (i.e., for every $z'\in Z$ there exists an $(x,y)\in X\times Y$ such that $g^n(x,y)=z'$).
\end{lemma}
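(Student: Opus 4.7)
The plan is to prove \autoref{nnr:lem:conf} via a density-restoration argument in the style of \cite{GLMWZ,Goos}. I would extract from $X \times Y$ a product sub-rectangle $X^\star \times Y^\star$ in which the coordinates of some small set $F \subseteq [n]$ with $|F| = O(c/b)$ are frozen to specific values, while on $I := [n] \setminus F$ both $X^\star$ and $Y^\star$ are ``blockwise dense''. The inner-product discrepancy bound then forces $g^n(X^\star, Y^\star)$ to contain the subcube obtained by fixing coordinates in $F$ (to values consistent with $z_F$) and letting coordinates in $I$ range freely over $\{0,1\}$, giving the desired $Z$.

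Concretely, call $A \subseteq (\{0,1\}^b)^n$ \emph{blockwise $\rho$-dense on $I$} if for every $J \subseteq I$ and every $\alpha$ in the $J$-projection's support, $\Pr_{a \sim A}[a_J = \alpha] \ge \rho^{|J|}$. For the extraction step, while the current $X$ (or $Y$) contains a coordinate $i \in I$ whose marginal puts weight above $\rho$ on some atom $\alpha$ but is not yet frozen to $\alpha$, restrict that set to $\{a : a_i = \alpha\}$ and move $i$ from $I$ into $F$, alternating between $X$ and $Y$. Choosing $\rho = 2^{-b/2 + O(1)}$ so that the IP discrepancy bound can be applied on each free coordinate, the right potential to track is the log of the density of the current $(X \cap \{x_F = \alpha_F\}) \times (Y \cap \{y_F = \beta_F\})$ within the restricted slice of $W_z$; each freezing step loses at most $O(b)$ bits in this potential, and since the hypothesis gives a starting value $\ge -c$, the process terminates in $O(c/b) = O(c/\log n)$ steps with the desired $X^\star, Y^\star$, and $F$. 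For the final step, I would show by induction on $|I|$ that any such blockwise-dense pair satisfies $g^n(X^\star, Y^\star) \supseteq \{v\}$ for every $v \in \{0,1\}^I$ (matched with the fixed pattern on $F$): for any coordinate $i \in I$ and desired bit $v_i$, blockwise density gives $|X^\star_i|, |Y^\star_i| \ge \rho \cdot 2^b \gg 2^{b/2}$, so the standard discrepancy bound $\bigl||A \times B \cap g^{-1}(v_i)| - \tfrac12|A||B|\bigr| \le \tfrac12\sqrt{|A||B| \cdot 2^b}$ (a Cauchy--Schwarz calculation based on $\sum_{x,y}(-1)^{\langle x,y\rangle \bmod 2} = 2^b$) produces a suitable $(x_i, y_i)$, after which blockwise density is preserved on the remaining coordinates (with only marginally degraded $\rho$) and the induction carries through.

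The main obstacle is the tight accounting of density losses during extraction: one must argue that freezing to a popular value in coordinate $i$ loses only about $b$ bits (rather than $2b$) in the normalized density potential. The reason this is possible is that the absolute loss factor equals roughly $\Pr_{a \sim A}[a_i = \alpha] \cdot 2^b / |g^{-1}(z_i)|$, and $|g^{-1}(z_i)| = \Theta(2^{2b-1})$ so the denominator already absorbs one factor of $2^b$. This asymmetry is what lets $|F|$ scale as $O(c/b)$ rather than $O(c)$, which is exactly what yields the claimed $O(c/\log n)$ codimension; once this accounting is in place, the rest of the proof is a routine induction via the IP discrepancy bound.
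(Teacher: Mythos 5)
First, a point of comparison: the paper does not prove \autoref{nnr:lem:conf} at all---it is imported as a special case of the main technical result of \cite{GLMWZ}, with a streamlined self-contained proof in \cite{Goos}. So your proposal is an attempt to reconstruct that external machinery, and it does follow the same general plan as those works (extract a structured subrectangle by fixing few blocks, then a ``full range'' argument via inner-product discrepancy). As a self-contained reconstruction, however, it has genuine gaps at exactly the places where the real work lies.

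Concretely: (a) your notion of blockwise density is stated with the inequality pointing the wrong way---what is needed is a min-entropy condition, an \emph{upper} bound $\Pr_{a\sim A}[a_J=\alpha]\le\rho^{|J|}$ for every $J$ and every $\alpha$ (no heavy atoms on any block subset); a lower bound over support atoms is neither produced by your freezing step nor usable in the discrepancy argument. (b) More substantively, freezing only single coordinates whose marginals have a heavy atom does not yield the subset-wise condition, and the subset-wise condition is precisely what makes density robust under conditioning: in your final induction, revealing one coordinate's value is conditioning on an event of probability as small as $\rho$, which can collapse the marginals of the remaining coordinates, so the assertion that ``blockwise density is preserved with only marginally degraded $\rho$'' is exactly the point at issue and is false for the per-coordinate notion. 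The proofs in \cite{GLMWZ,Goos} restrict entire violating subsets $J$ at once and run the potential argument over all subsets---this is also where the $O(c/b)$ accounting actually happens, not in a per-coordinate count. (c) The concluding step cannot choose $(x_i,y_i)$ independently per coordinate: a single global pair $(x,y)$ must realize all coordinates of $z'$ simultaneously, so each step must condition the whole sets and re-establish density, which the cited arguments handle with a global second-moment/union-bound argument rather than a naive coordinate induction; relatedly, the density rate $\rho=2^{-b/2+O(1)}$ is borderline for Lindsey's lemma and leaves no slack for the union bound over the $2^n$ targets (the standard choice is rate $0.9$, which is why $b=100\log_2 n$ is taken so large). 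Finally, the conversion between largeness relative to the slice $W_z$ and largeness of $X$ and $Y$ themselves is gestured at but not carried out. None of these are fatal to the approach---it is, after all, the approach of \cite{GLMWZ,Goos}---but as written the proposal omits the parts of that machinery that are actually hard.
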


\begin{proof}[Proof of \autoref{nnr:lem:lower}]
Suppose for contradiction there is a collection of rectangles $R_i=X_i\times Y_i$ for $i\in\{1,\ldots,2^{o(\sqrt{n\log n})}\}$ such that if $(f\circ g^n)(x,y)=1$ then $(x,y)\in R_i$ for exactly one $i$, and if $(f\circ g^n)(x,y)=0$ then $(x,y)\in R_i$ for no $i$'s.

First we claim that for all $i<j$, $|R_i\cap R_j|\le 2^{-\Omega(n\log n)}\cdot 2^{2bn}$. Supposing not, we would have $|R_i\cap R_j\cap W_z|\ge 2^{-o(n\log n)}\cdot|W_z|$ for some $z$ (since the sets $W_z$ partition $\{0,1\}^{bn}\times\{0,1\}^{bn}$). Then by \autoref{nnr:lem:conf} there would exist a subcube $Z\subseteq g^n(R_i\cap R_j)$ of codimension $o((n\log n)/\log n)=o(n)\le n/2$. But since any subcube of codimension $\le n/2$ contains a $1$-input of our $f$, this means $R_i\cap R_j$ contains a $1$-input of $f\circ g^n$, contradicting our assumption. This proves the claim.

In particular, for all $i<j$ we have either $|X_i\cap X_j|\le 2^{-\Omega(n\log n)}\cdot 2^{bn}$ or $|Y_i\cap Y_j|\le 2^{-\Omega(n\log n)}\cdot 2^{bn}$; define $Q_{i,j}=(X_i\cap X_j)\times\{0,1\}^{bn}$ in the former case and $Q_{i,j}=\{0,1\}^{bn}\times(Y_i\cap Y_j)$ in the latter case. Then we have $|Q_{i,j}|\le 2^{-\Omega(n\log n)}\cdot 2^{2bn}$ and hence $\bigl|\bigcup_{i<j}Q_{i,j}\bigr|\le\sum_{i<j}|Q_{i,j}|\le(2^{o(\sqrt{n\log n})})^2\cdot 2^{-\Omega(n\log n)}\cdot 2^{2bn}\le 2^{-\Omega(n\log n)}\cdot 2^{2bn}$.

Define $S=\bigl(\{0,1\}^{bn}\times\{0,1\}^{bn}\bigr)\smallsetminus\bigcup_{i<j}Q_{i,j}$, and note that $S$ is a rectangle. Define the total function $F\colon S\to\{0,1\}$ by $F(x,y)=1$ iff $(x,y)\in R_i$ for some $i$; note that $F$ is consistent with $f\circ g^n$. Since the rectangles $R_i\cap S$ are pairwise disjoint, we have $\log\binr(F)\le o(\sqrt{n\log n})$ and thus the deterministic (in particular, conondeterministic) communication complexity of $F$ is $o(n\log n)$ by \eqref{nnr:eq:quadratic}.\footnote{Any improvement in the cost of converting unambiguous protocols to conondeterministic protocols (for total functions) would yield a corresponding quantitative improvement in \autoref{nnr:lem:lower}; however, the result of \cite{Goos} shows that the exponent of $2$ in such a conversion cannot be decreased below $1.128$.} Thus the set $W_{0^n}\cap S$ can be covered with $2^{o(n\log n)}$ many subrectangles of $S$ that are disjoint from all $R_i$'s. At least one of these subrectangles, call it $T$, covers at least a $2^{-o(n\log n)}$ fraction of $W_{0^n}\cap S$. Since $|W_{0^n}|\ge 2^{-n}\cdot 2^{2bn}$, we have $|W_{0^n}\cap S|\ge|W_{0^n}|-\bigl|\bigcup_{i<j}Q_{i,j}\bigr|\ge\frac{1}{2}\cdot|W_{0^n}|$. Thus $T$ also covers at least a $2^{-o(n\log n)}$ fraction of $W_{0^n}$. By \autoref{nnr:lem:conf}, there exists a subcube $Z\subseteq g^n(T)$ of codimension $o((n\log n)/\log n)=o(n)\le n/2$. But since any subcube of codimension $\le n/2$ contains a $1$-input of our $f$, this means $T$ contains a $1$-input of $f\circ g^n$, contradicting the fact that $T$ is disjoint from all $R_i$'s.
\end{proof}

%%%%%%%%%%%%%%%%%%%%%%%%%%%%%%%%%%%%%%%%%%%%%%%%%%%%%%%%%%%%%%%%%%%%%%%%%%%%%%
%%%%%%%%%%%%%%%%%%%%%%%%%%%%%%%%%%%%%%%%%%%%%%%%%%%%%%%%%%%%%%%%%%%%%%%%%%%%%%

\section{Total Matrices}
\label{nnr:sec:total}

In this section we prove \autoref{nnr:thm:main}. All matrices are tacitly assumed to be total in this section.

%%%%%%%%%%%%%%%%%%%%%%%%%%%%%%%%%%%%%%%%%%%%%%%%%%%%%%%%%%%%%%%%%%%%%%%%%%%%%%
%%%%%%%%%%%%%%%%%%%%%%%%%%%%%%%%%%%%%%%%%%%%%%%%%%%%%%%%%%%%%%%%%%%%%%%%%%%%%%

\subsection{A Lemma}
\label{nnr:sec:total:lemma}

We use the following lemma in the proof of \refmainone (but we remark that this is not the only property of nonnegative rank we use in the proof of \refmainone).

The \emph{support} of a matrix is the set of locations of nonzero entries. The support of a rank $1$ matrix is always a rectangle.

\begin{lemma}
\label{nnr:lem:leftover}
For every $0$-$1$ matrix $M$, every $\nnr$ decomposition $M=M^{(1)}+\cdots+M^{(r)}$ with supports $R_1,\ldots,R_r$, and every $i\in\{1,\ldots,r\}$, $R_i\smallsetminus\bigcup_{j\ne i}R_j$ is a rectangle.
\end{lemma}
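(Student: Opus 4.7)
The plan is to show $S_i := R_i \setminus \bigcup_{j\ne i} R_j$ is combinatorially closed under the rectangle property: if $(x,y'),(x',y)\in S_i$, then $(x,y)\in S_i$. Define $A_i' = \{x : \exists y,\,(x,y)\in S_i\}$ and $B_i' = \{y : \exists x,\,(x,y)\in S_i\}$; I will verify that $S_i = A_i'\times B_i'$. The inclusion $S_i\subseteq A_i'\times B_i'$ is immediate, so the work is in the reverse direction.

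Write $M^{(i)} = u^{(i)}(v^{(i)})^\top$ so that $R_i = A_i\times B_i$ where $A_i = \{x : u^{(i)}_x>0\}$ and $B_i = \{y : v^{(i)}_y>0\}$. First I would record the basic consequence of being in $S_i$: if $(x,y)\in S_i$ then every other component vanishes there, so $M(x,y) = M^{(i)}(x,y)$, and since $M$ is $0$-$1$ and $M^{(i)}(x,y)>0$ we must have $M^{(i)}(x,y)=1$ exactly. This is the key arithmetic hook that the $\{0,1\}$-ness of $M$ provides.

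Now pick $x\in A_i'$ and $y\in B_i'$ witnessed by $(x,y'),(x',y)\in S_i$. Membership $(x,y)\in R_i$ is automatic from $R_i$ being a rectangle. Suppose, for contradiction, that $(x,y)\in R_j$ for some $j\ne i$. By the observation above, $M^{(i)}(x,y')=1$ and $M^{(i)}(x',y)=1$. By the rank-$1$ multiplicativity of $M^{(i)}$,
\[
M^{(i)}(x,y)\cdot M^{(i)}(x',y') \;=\; M^{(i)}(x,y')\cdot M^{(i)}(x',y)\;=\;1.
\]
Since $M$ is $0$-$1$ and all components are nonnegative, both $M^{(i)}(x,y)\le 1$ and $M^{(i)}(x',y')\le 1$; combined with the product being $1$, this forces $M^{(i)}(x,y)=1$. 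But the assumption $(x,y)\in R_j$ gives $M^{(j)}(x,y)>0$, so $M(x,y) \ge M^{(i)}(x,y)+M^{(j)}(x,y) > 1$, contradicting $M(x,y)\in\{0,1\}$. Hence $(x,y)\notin R_j$ for every $j\ne i$, so $(x,y)\in S_i$.

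The main (and really only) obstacle is recognizing that the single $\{0,1\}$ constraint on $M$ interacts with the rank-$1$ cross-product identity on $M^{(i)}$ so as to pin down the diagonal value $M^{(i)}(x,y)$ exactly at $1$; without that, the set-theoretic difference $R_i\setminus\bigcup_{j\ne i}R_j$ need not be a rectangle. The rest is bookkeeping about supports.
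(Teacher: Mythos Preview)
Your proof is correct and follows essentially the same approach as the paper: use the rank-$1$ cross-ratio identity on $M^{(i)}$ together with the fact that all entries of $M^{(i)}$ lie in $[0,1]$ (since $M$ is $0$-$1$ and the decomposition is nonnegative) to force $M^{(i)}(x,y)=1$, which then leaves no room for any other $M^{(j)}$ to be positive at $(x,y)$. The only cosmetic difference is that you wrap the last step as a proof by contradiction, whereas the paper argues directly; the derivation of $M^{(i)}(x,y)=1$ does not actually use the hypothesis $(x,y)\in R_j$, so you could streamline by dropping the contradiction framing.
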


\begin{proof}
Denote $R=R_i\smallsetminus\bigcup_{j\ne i}R_j$. Assume $(a,b),(c,d)\in R$ and $a\ne c$ and $b\ne d$. Then we have $M^{(i)}_{a,b}=M^{(i)}_{c,d}=1$, and since $M^{(i)}$ has rank $1$ we have $M^{(i)}_{a,d}\cdot M^{(i)}_{c,b}=M^{(i)}_{a,b}\cdot M^{(i)}_{c,d}=1$. Since every entry of $M^{(i)}$ is in the range $[0,1]$, this forces $M^{(i)}_{a,d}=M^{(i)}_{c,b}=1$ and so $(a,d),(c,b)\in R_i$. In turn, this forces $M^{(j)}_{a,d}=M^{(j)}_{c,b}=0$ for all $j\ne i$ and hence $(a,d),(c,b)\not\in R_j$. Thus $(a,d),(c,b)\in R$ and so $R$ is a rectangle.
\end{proof}

As an aside (not needed for \refmainone), here is a simple application illustrating \autoref{nnr:lem:leftover}.

\begin{observation}
\label{nnr:obs:bipartite}
Define the intersection graph of a collection of rectangles to have nodes representing the rectangles, and an edge between two nodes iff their rectangles intersect. Consider a $0$-$1$ matrix $M$ with $\nnr(M)=r$ and a $\nnr$ decomposition $M=M^{(1)}+\cdots+M^{(r)}$ with supports $R_1,\ldots,R_r$. If the intersection graph of the supports is bipartite, then $\nnr(M)=\binr(M)$.
\end{observation}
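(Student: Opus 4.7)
The plan is to turn the given $\nnr$ decomposition $M=M^{(1)}+\cdots+M^{(r)}$ into a partition of the $1$-entries of $M$ into $r$ rectangles, which would witness $\binr(M)\le r=\nnr(M)$ and, combined with the always-true inequality $\nnr(M)\le\binr(M)$, yield equality. First I would use bipartiteness of the intersection graph to fix a $2$-coloring of the indices, writing $\{1,\ldots,r\}=L\sqcup R$ so that $R_i\cap R_j=\emptyset$ whenever $i,j$ lie in the same color class.

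Next I would define trimmed supports $R_i'$ by keeping the $R$-colored rectangles intact and ``giving way'' on the $L$ side: set $R_i':=R_i$ for $i\in R$, and $R_i':=R_i\smallsetminus\bigcup_{j\in R}R_j$ for $i\in L$. The key observation is that each $R_i'$ is a combinatorial rectangle. This is immediate for $i\in R$. For $i\in L$, bipartiteness says $R_i\cap R_j=\emptyset$ for every $j\in L\smallsetminus\{i\}$, so $R_i\smallsetminus\bigcup_{j\in R}R_j$ coincides with $R_i\smallsetminus\bigcup_{j\ne i}R_j$, which is a rectangle by \autoref{nnr:lem:leftover}.

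It then remains to check that $R_1',\ldots,R_r'$ partition the $1$-entries of $M$. Any $1$-entry $(a,b)$ lies in some $R_k$, since $\sum_k M^{(k)}_{a,b}=1>0$ forces some $M^{(k)}_{a,b}>0$; if any $R$-colored index works then it is unique by bipartiteness and places $(a,b)$ in the corresponding $R_i'$, while otherwise $(a,b)$ lies in a unique $L$-colored $R_i$, which retains it by construction. Disjointness holds by bipartiteness within each color class and by the removal across color classes, and $R_i'\subseteq R_i$ together with nonnegativity of the $M^{(k)}$ confirms that $R_i'$ contains only $1$-entries of $M$.

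The only real obstacle is making \autoref{nnr:lem:leftover} applicable: the definition of $R_i'$ for $i\in L$ does not match the lemma's hypothesis at face value, and the crux is the one-line observation that bipartiteness collapses the removed set $\bigcup_{j\in R}R_j\cap R_i$ to $\bigcup_{j\ne i}R_j\cap R_i$, at which point the lemma does the rest.
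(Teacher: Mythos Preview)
Your proof is correct and follows essentially the same approach as the paper's: both pick one side of the bipartition to keep intact and trim the other side, then invoke \autoref{nnr:lem:leftover} after observing that bipartiteness collapses $\bigcup_{j\in R}R_j$ to $\bigcup_{j\ne i}R_j$ inside $R_i$. The paper phrases the chosen color class as a set that is ``simultaneously an independent set and a vertex cover,'' but this is exactly one part of a bipartition, so the arguments are the same.
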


\begin{proof}
Consider a set of rectangles $\{R_i\}_{i\in S}$ that simultaneously forms an independent set and a vertex cover. For $i\not\in S$, define $R'_i=R_i\smallsetminus\bigcup_{j\in S}R_j$. We claim that the collection $\{R_i\}_{i\in S}\cup\{R'_i\}_{i\not\in S}$ forms a partition of the $1$'s of $M$ into $r$ rectangles. \emph{Coverage:} Since the collection of all $R_i$'s forms a cover of the $1$'s of $M$, it follows by definition that $\{R_i\}_{i\in S}\cup\{R'_i\}_{i\not\in S}$ also forms a cover. \emph{Disjointness:} The rectangles $R_i$ for $i\in S$ are disjoint from each other by the independent set property. The rectangles $R_i$ for $i\not\in S$ are disjoint from each other by the vertex cover property, so certainly the sets $R'_i$ for $i\not\in S$ are disjoint from each other. For $i\not\in S$ and $j\in S$, $R'_i$ and $R_j$ are disjoint by definition. \emph{Rectangles:} For $i\not\in S$, we have $R'_i=R_i\smallsetminus\bigcup_{j\ne i}R_j$ by the vertex cover property, and the latter set is a rectangle by \autoref{nnr:lem:leftover}.
\end{proof}

%%%%%%%%%%%%%%%%%%%%%%%%%%%%%%%%%%%%%%%%%%%%%%%%%%%%%%%%%%%%%%%%%%%%%%%%%%%%%%
%%%%%%%%%%%%%%%%%%%%%%%%%%%%%%%%%%%%%%%%%%%%%%%%%%%%%%%%%%%%%%%%%%%%%%%%%%%%%%

\subsection{Proof of \refmainone}
\label{nnr:sec:total:proof}

Consider the following two techniques for converting a $1$'s cover of $M$ by rectangles $R_1,\ldots,R_r$ into a $1$'s partition of $M$.\bigskip

\noindent\emph{Technique 1:} For some permutation $\pi$ on $\{1,\ldots,r\}$, use the sets $R_i\smallsetminus\bigcup_{j:\pi(j)<\pi(i)}R_j$ (for $i\in\{1,\ldots,r\}$).\bigskip

\noindent\emph{Technique 2:} Let $R_i=A_i\times B_i$. For $\alpha\in\{1,-1\}^r$, we say that $\bigcap_iA_i^{\alpha_i}$ is a \emph{type} of row (with respect to the cover $R_1,\ldots,R_r$), where $A_i^1=A_i$ and $A_i^{-1}=\overline{A_i}$. We say the type is \emph{nontrivial} if $\alpha$ is not all $-1$'s. Note that all rows of the same type are identical, and thus each nontrivial nonempty type has an associated rectangle that covers the $1$'s in all rows of that type. The technique is to use this collection of rectangles, over all the nontrivial nonempty types of rows, or do the analogous thing for nontrivial nonempty types of columns.\bigskip

Technique 1 is guaranteed to produce a small partition, but the sets are not guaranteed to be rectangles. Technique 2 is guaranteed to produce a partition into rectangles, but it is not guaranteed to be a small partition (it is small if many types of rows/columns are empty). Our approach to prove \refmainone is to argue that, for a cover arising as the supports of a nonnegative rank $3$ decomposition, \emph{either} Technique 1 works \emph{or} Technique 2 works.

In principle, \refmainone could be proved by brute force. The reason is because in a nonnegative rank $3$ decomposition, there can be at most seven nontrivial types of rows and seven nontrivial types of columns, and duplicate rows and columns can be deleted without changing the nonnegative rank or the binary rank. Hence if there were a counterexample to \refmainone, there would be a counterexample of size at most $7\times 7$. However, such a brute force argument would be unenlightening; our argument provides insight into why \refmainone is true.

\begin{definition}
\label{nnr:def:compatible}
We say a pair of rectangles $(R,Q)$ is \emph{compatible} if $Q\smallsetminus R$ is a rectangle, in other words, either $R\cap Q=\emptyset$, or the rows of $R$ are a superset of the rows of $Q$, or the columns of $R$ are a superset of the columns of $Q$.
\end{definition}

\begin{proof}[Proof of \refmainone]
Of course, $\binr(M)\ge\nnr(M)$ for all $0$-$1$ matrices $M$, so we just need to show that if $\nnr(M)\le 3$ then $\binr(M)\le\nnr(M)$. The case $\nnr(M)\le 1$ is trivial. Suppose $\nnr(M)=2$ and let $M=M^{(1)}+M^{(2)}$ be a $\nnr$ decomposition with supports $R_1,R_2$. By \autoref{nnr:lem:leftover}, $R_2\smallsetminus R_1$ is a rectangle, and hence $R_1$ and $R_2\smallsetminus R_1$ are two rectangles forming a $1$'s partition of $M$, so $\binr(M)\le 2$.

Suppose $\nnr(M)=3$ and let $M=M^{(1)}+M^{(2)}+M^{(3)}$ be a $\nnr$ decomposition with supports $R_1,R_2,R_3$. First assume two of these rectangles form a compatible pair, say $(R_1,R_2)$. Then $R_2\smallsetminus R_1$ is a rectangle, and $R_3\smallsetminus(R_1\cup R_2)$ is a rectangle by \autoref{nnr:lem:leftover}. Hence $R_1$ and $R_2\smallsetminus R_1$ and $R_3\smallsetminus(R_1\cup R_2)$ are three rectangles forming a $1$'s partition of $M$, so $\binr(M)\le 3$.

Now assume no two of the rectangles $R_1,R_2,R_3$ form a compatible pair. We show that either there are only three nontrivial nonempty types of rows, or there are only three nontrivial nonempty types of columns (which gives a $1$'s partition into three rectangles by Technique 2). Say $R_i=A_i\times B_i$. Since neither $(R_1,R_2)$ nor $(R_2,R_1)$ are compatible, the following sets are all nonempty: $A_1\smallsetminus A_2$, $A_1\cap A_2$, $A_2\smallsetminus A_1$, $B_1\smallsetminus B_2$, $B_1\cap B_2$, $B_2\smallsetminus B_1$. Now observe that either $(A_1\cap A_2)\times(B_1\smallsetminus B_2)\subseteq R_3$ or $(A_1\smallsetminus A_2)\times(B_1\cap B_2)\subseteq R_3$, since if not then letting $(a,b)\in\bigl((A_1\cap A_2)\times(B_1\smallsetminus B_2)\bigr)\smallsetminus R_3$ and $(c,d)\in\bigl((A_1\smallsetminus A_2)\times(B_1\cap B_2)\bigr)\smallsetminus R_3$, we have $(a,b),(c,d)\in R_1\smallsetminus(R_2\cup R_3)$ but $(a,d)\in R_1\cap R_2$ and hence $(a,d)\not\in R_1\smallsetminus(R_2\cup R_3)$, so $R_1\smallsetminus(R_2\cup R_3)$ is not a rectangle, contradicting \autoref{nnr:lem:leftover}. Similarly, either $(A_1\cap A_2)\times(B_2\smallsetminus B_1)\subseteq R_3$ or $(A_2\smallsetminus A_1)\times(B_1\cap B_2)\subseteq R_3$.

Henceforth assume that $(A_1\cap A_2)\times(B_1\smallsetminus B_2)\subseteq R_3$ (if $(A_1\smallsetminus A_2)\times(B_1\cap B_2)\subseteq R_3$ then a symmetric argument applies). If $(A_2\smallsetminus A_1)\times(B_1\cap B_2)\subseteq R_3$ then $B_1\subseteq B_3$ and thus $(R_3,R_1)$ would be compatible, so we may henceforth assume that $(A_1\cap A_2)\times(B_2\smallsetminus B_1)\subseteq R_3$.

Consider three cases: $(A_1\smallsetminus A_2)\cap A_3=\emptyset$ or $A_1\smallsetminus A_2\subseteq A_3$ or neither. In the second case, $A_1\subseteq A_3$ and thus $(R_3,R_1)$ would be compatible. In the third case, we claim that we must have $B_1\subseteq B_3$, and thus $(R_3,R_1)$ would be compatible. To prove the claim, first note that $R_1\smallsetminus(R_2\cup R_3)=\bigl((A_1\smallsetminus A_2)\times B_1\bigr)\smallsetminus R_3$ since $(A_1\cap A_2)\times(B_1\smallsetminus B_2)\subseteq R_3$. We have $\emptyset\subsetneq(A_1\smallsetminus A_2)\cap A_3\subsetneq A_1\smallsetminus A_2$ (since we are assuming the third case) and, if we assume $B_1\not\subseteq B_3$, we have $\emptyset\subsetneq B_1\cap B_3\subsetneq B_1$ (since $\emptyset\subsetneq B_1\smallsetminus B_2\subseteq B_3$). Together, these imply that $\bigl((A_1\smallsetminus A_2)\times B_1\bigr)\smallsetminus R_3$ is not a rectangle, contradicting \autoref{nnr:lem:leftover}. So, we may henceforth assume the first case, namely $(A_1\smallsetminus A_2)\cap A_3=\emptyset$. Similarly, we may henceforth assume $(A_2\smallsetminus A_1)\cap A_3=\emptyset$.

To finish the proof, we consider two cases:
\begin{itemize}
\item[(i)] $B_1\cap B_2\cap B_3\ne\emptyset$, or
\item[(ii)] $B_1\cap B_2\cap B_3=\emptyset$.
\end{itemize}
First assume (i) holds. We claim that $B_1\cap B_2\subseteq B_3$ which, together with $B_1\smallsetminus B_2\subseteq B_3$, implies that $B_1\subseteq B_3$ and thus $(R_3,R_1)$ would be compatible (in fact, $(R_3,R_2)$ would also be compatible). Let $b\in B_1\cap B_2\cap B_3$ and suppose for contradiction that some $d\in(B_1\cap B_2)\smallsetminus B_3$. Let $a\in A_1\smallsetminus A_2$ and $c\in A_1\cap A_2$ and $e\in A_2\smallsetminus A_1$. Since $(A_1\smallsetminus A_2)\cap A_3=\emptyset$ we have $(a,b),(a,d)\in R_1\smallsetminus(R_2\cup R_3)$ and thus $M^{(1)}_{a,b}=M^{(1)}_{a,d}=1$. Since $M^{(1)}$ has rank $1$, this implies that $M^{(1)}_{c,b}=M^{(1)}_{c,d}$. Similarly, $M^{(2)}_{c,b}=M^{(2)}_{c,d}$ (by using $e$ in place of $a$). However, $(c,b)\in R_3$ (since $A_1\cap A_2\subseteq A_3$) and $(c,d)\not\in R_3$ (since $d\not\in B_3$), and thus $M^{(3)}_{c,b}>0=M^{(3)}_{c,d}$. Hence \[M_{c,b}~=~M^{(1)}_{c,b}+M^{(2)}_{c,b}+M^{(3)}_{c,b}~>~M^{(1)}_{c,d}+M^{(2)}_{c,d}+M^{(3)}_{c,d}~=~M_{c,d}\] which is a contradiction since $M_{c,b}=1=M_{c,d}$.

Now assume (ii) holds. In this case, $(R_1\cup R_2)\cap R_3=(A_1\cap A_2)\times\bigl((B_1\smallsetminus B_2)\cup(B_2\smallsetminus B_1)\bigr)$. If $A_3\smallsetminus(A_1\cup A_2)\ne\emptyset$ and $B_3\smallsetminus(B_1\cup B_2)\ne\emptyset$ then $R_3\smallsetminus(R_1\cup R_2)$ is not a rectangle (since if $a\in A_3\smallsetminus(A_1\cup A_2)$, $b\in B_3\smallsetminus(B_1\cup B_2)$, $c\in A_1\cap A_2$, and $d\in B_1\smallsetminus B_2$, then $(a,d),(c,b)\in R_3\smallsetminus(R_1\cup R_2)$ but $(c,d)\in R_1$), contradicting \autoref{nnr:lem:leftover}. If $A_3\smallsetminus(A_1\cup A_2)=\emptyset$ then $A_3=A_1\cap A_2$ and thus there are only three nontrivial nonempty types of rows ($A_1\cap\overline{A_2}\cap\overline{A_3}$ and $A_1\cap A_2\cap A_3$ and $\overline{A_1}\cap A_2\cap\overline{A_3}$). On the other hand, if $B_3\smallsetminus(B_1\cup B_2)=\emptyset$ then $B_3=(B_1\smallsetminus B_2)\cup(B_2\smallsetminus B_1)$ and thus there are only three nontrivial nonempty types of columns ($B_1\cap\overline{B_2}\cap B_3$ and $B_1\cap B_2\cap\overline{B_3}$ and $\overline{B_1}\cap B_2\cap B_3$).
\end{proof}

%%%%%%%%%%%%%%%%%%%%%%%%%%%%%%%%%%%%%%%%%%%%%%%%%%%%%%%%%%%%%%%%%%%%%%%%%%%%%%
%%%%%%%%%%%%%%%%%%%%%%%%%%%%%%%%%%%%%%%%%%%%%%%%%%%%%%%%%%%%%%%%%%%%%%%%%%%%%%

\subsection{Some Examples}
\label{nnr:sec:total:examples}

We now give some examples that elucidate features of the proof of \refmainone.

\[\left[\begin{array}{cccc}
1&1&0&1\\
1&0&1&1\\
0&1&1&1\\
1&1&1&1
\end{array}\right]~~~~~~~~~~~~
\left[\begin{array}{cccc}
0&1&0&0\\
1&1&1&0\\
0&1&1&1\\
0&0&1&1
\end{array}\right]~~~~~~~~~~~~
\left[\begin{array}{ccc}
1&1&0\\
1&0&1\\
0&1&1\\
1&1&1
\end{array}\right]\]

The first matrix above shows that \autoref{nnr:lem:leftover} cannot possibly be the only property of $\nnr$ decompositions used to prove \refmainone, because this matrix has a $1$'s cover of three rectangles satisfying the conclusion of \autoref{nnr:lem:leftover} ($\{1,4\}\times\{1,2,4\}$ and $\{2,4\}\times\{1,3,4\}$ and $\{3,4\}\times\{2,3,4\}$), yet the binary rank is $4$.

The second matrix above shows that Technique 2 is not sufficient to prove \refmainone, because the binary rank is $3$, yet there are four distinct nonzero rows and four distinct nonzero columns (hence necessarily four nontrivial nonempty types of rows and four nontrivial nonempty types of columns in any $1$'s cover by rectangles).

The third matrix above shows that Technique 1 is not sufficient for converting an arbitrary nonnegative rank $3$ decomposition into a $1$'s partition with the same number of rectangles, since this matrix has a nonnegative rank decomposition with three matrices, no two of whose supports are compatible.

%%%%%%%%%%%%%%%%%%%%%%%%%%%%%%%%%%%%%%%%%%%%%%%%%%%%%%%%%%%%%%%%%%%%%%%%%%%%%%
%%%%%%%%%%%%%%%%%%%%%%%%%%%%%%%%%%%%%%%%%%%%%%%%%%%%%%%%%%%%%%%%%%%%%%%%%%%%%%

\subsection{Separations}
\label{nnr:sec:total:gap}

\begin{proof}[Proof of \refmaintwo]
Consider the following $5\times 6$ matrix $M$ consisting of all possible columns having a $1$ in the bottom row and two $1$'s among the top four rows. (In fact, any column of $M$ can be deleted to yield a $5\times 5$ matrix that still works, but it seems cleaner to include all the columns.)
\definecolor{mygreen}{RGB}{0,191,0}
\[M~=~\left[\begin{array}{cccccc}
{\color{red}1}&{\color{mygreen}1}&{\color{blue}1}&0&0&0\\
{\color{mygreen}1}&0&0&{\color{red}1}&{\color{blue}1}&0\\
0&{\color{red}1}&0&{\color{blue}1}&0&{\color{mygreen}1}\\
0&0&{\color{mygreen}1}&0&{\color{red}1}&{\color{blue}1}\\
1&1&1&1&1&1
\end{array}\right]\hspace{2.5cm}
U~=~\left[\begin{array}{cccc}
1&0&0&0\\
0&1&0&0\\
0&0&1&0\\
0&0&0&1\\
\frac{1}{2}&\frac{1}{2}&\frac{1}{2}&\frac{1}{2}
\end{array}\right]\]
We have $\nnr(M)\ge 4$ since $M$ has a fooling set of size $4$. We also have $\nnr(M)\le 4$ since $M=UV$ where $U$ is as above and $V$ is the top four rows of $M$. Clearly $\binr(M)\le 5$, so we just need to argue that $\binr(M)\ge 5$. There are several (ad hoc) ways to see this; our preferred way is as follows. Note that the $1$'s of the top four rows can be partitioned into three fooling sets of size four, indicated by the colors red, green, and blue.\footnote{In case the colors are not visible: red is entries $(1,1),(2,4),(3,2),(4,5)$, green is entries $(1,2),(2,1),(3,6),(4,3)$, and blue is entries $(1,3),(2,5),(3,4),(4,6)$, where row $1$ is topmost and column $1$ is leftmost.} Suppose for contradiction there is a partition of the $1$'s of $M$ into just four rectangles. Then each of those rectangles must contain exactly one red $1$, one green $1$, and one blue $1$, and hence must contain all three $1$'s in one of the top four rows (since these are the only size-three rectangles within the top four rows). Hence each pair of rectangles in the partition shares one column in common. Thus only one of the four rectangles can touch the bottom row; but since this rectangle is three columns wide, that leaves three $1$'s in the bottom row uncovered, which is a contradiction.
\end{proof}

The upper bound $\nnr(M)\le 4$ for \refmaintwo can be phrased in terms of a sampling protocol as follows. One party (it does not matter which) picks one of the top four rows uniformly at random and sends this index to the other party (so $2$ bits of communication, $4$ possible transcripts). Then independently of each other: Alice outputs the chosen row with probability $2/3$ and the bottom row with probability $1/3$. Bob outputs uniformly at random one of the three columns in which the chosen row has a $1$ in $M$.

\begin{proof}[Proof of \refmainthree]
Consider the following $12\times 12$ matrix $M$.
\definecolor{mygreen}{RGB}{0,191,0}
\[M~=~\left[\begin{array}{cccccccccccc}
{\color{red}1}&{\color{red}1}&{\color{red}0}&0&0&0&0&0&0&1&0&0\\
{\color{red}1}&{\color{red}0}&{\color{red}1}&0&0&0&0&0&0&0&1&0\\
{\color{red}0}&{\color{red}1}&{\color{red}1}&0&0&0&0&0&0&0&0&1\\
0&0&0&{\color{mygreen}1}&{\color{mygreen}1}&{\color{mygreen}0}&0&0&0&1&0&0\\
0&0&0&{\color{mygreen}1}&{\color{mygreen}0}&{\color{mygreen}1}&0&0&0&0&1&0\\
0&0&0&{\color{mygreen}0}&{\color{mygreen}1}&{\color{mygreen}1}&0&0&0&0&0&1\\
0&0&0&0&0&0&{\color{blue}1}&{\color{blue}1}&{\color{blue}0}&1&0&0\\
0&0&0&0&0&0&{\color{blue}1}&{\color{blue}0}&{\color{blue}1}&0&1&0\\
0&0&0&0&0&0&{\color{blue}0}&{\color{blue}1}&{\color{blue}1}&0&0&1\\
{\color{red}1}&{\color{red}1}&{\color{red}1}&1&1&1&0&0&0&1&1&1\\
0&0&0&{\color{mygreen}1}&{\color{mygreen}1}&{\color{mygreen}1}&1&1&1&1&1&1\\
1&1&1&0&0&0&{\color{blue}1}&{\color{blue}1}&{\color{blue}1}&1&1&1
\end{array}\right]\hspace{1cm}
U~=~\left[\begin{array}{ccccccccc}
1&0&0&0&0&0&0&0&0\\
0&1&0&0&0&0&0&0&0\\
0&0&1&0&0&0&0&0&0\\
0&0&0&1&0&0&0&0&0\\
0&0&0&0&1&0&0&0&0\\
0&0&0&0&0&1&0&0&0\\
0&0&0&0&0&0&1&0&0\\
0&0&0&0&0&0&0&1&0\\
0&0&0&0&0&0&0&0&1\\
\frac{1}{2}&\frac{1}{2}&\frac{1}{2}&\frac{1}{2}&\frac{1}{2}&\frac{1}{2}&0&0&0\\
0&0&0&\frac{1}{2}&\frac{1}{2}&\frac{1}{2}&\frac{1}{2}&\frac{1}{2}&\frac{1}{2}\\
\frac{1}{2}&\frac{1}{2}&\frac{1}{2}&0&0&0&\frac{1}{2}&\frac{1}{2}&\frac{1}{2}
\end{array}\right]\]
We have $\nnr(M)\ge 9$ since $M$ has a fooling set of size $9$. We also have $\nnr(M)\le 9$ since $M=UV$ where $U$ is as above and $V$ is the top nine rows of $M$. Clearly $\binr(M)\le 12$, so we just need to argue that $\binr(M)\ge 12$. Suppose for contradiction there is a partition of the $1$'s of $M$ into just $11$ rectangles $R_1,\ldots,R_{11}$. Let $Q_r=\{1,2,3,10\}\times\{1,2,3\}$ be the red rectangle of $M$, let $Q_g=\{4,5,6,11\}\times\{4,5,6\}$ be the green rectangle of $M$, and let $Q_b=\{7,8,9,12\}\times\{7,8,9\}$ be the blue rectangle of $M$. Note that these three rectangles form a fooling set. (We say $A\times B$ and $C\times D$ are fooling if either $A\times D$ or $C\times B$ is an all $0$'s submatrix.) Hence no $R_i$ can intersect more than one of $Q_r,Q_g,Q_b$. By the pigeonhole principle, at least one of $Q_r,Q_g,Q_b$, say $Q_r$, intersects at most three of the $R_i$'s. By inspection, the only way to partition the $1$'s of $Q_r$ using at most three rectangles is to use $\{1,2,10\}\times\{1\}$, $\{1,3,10\}\times\{2\}$, and $\{2,3,10\}\times\{3\}$. Hence three of the $R_i$'s, say $R_1,R_2,R_3$, intersect $Q_r$ in those three subrectangles. Thus since each of $R_1,R_2,R_3$ touches two of the first three rows of $M$, none of $R_1,R_2,R_3$ can touch any of the last three columns of $M$ (otherwise a $0$-entry would be covered by an $R_i$). Thus the $1$-entries $M_{1,10}$, $M_{2,11}$, $M_{3,12}$ must be covered by some other distinct $R_i$'s, say $R_4,R_5,R_6$, none of which can intersect $Q_g$ or $Q_b$ since $\{1,2,3\}\times\{4,5,6,7,8,9\}$ is all $0$'s. This leaves $R_7,\ldots,R_{11}$ to cover the $1$'s of $Q_g$ and $Q_b$, which is a contradiction since three rectangles are needed for $Q_g$, and another three are needed for $Q_b$.
\end{proof}

The upper bound $\nnr(M)\le 9$ for \refmainthree can be phrased in terms of a sampling protocol as follows. One party (it does not matter which) picks $i\in\{1,\ldots,9\}$ uniformly at random and sends $i$ to the other party. Then independently of each other: Alice outputs row $i$ with probability $1/2$, and with probability $1/4$ each outputs one of the two rows in which column $i$ has a $\frac{1}{2}$ in $U$. Bob outputs uniformly at random one of the three columns in which row $i$ has a $1$ in $M$.

%%%%%%%%%%%%%%%%%%%%%%%%%%%%%%%%%%%%%%%%%%%%%%%%%%%%%%%%%%%%%%%%%%%%%%%%%%%%%%
%%%%%%%%%%%%%%%%%%%%%%%%%%%%%%%%%%%%%%%%%%%%%%%%%%%%%%%%%%%%%%%%%%%%%%%%%%%%%%

\section{Open Questions}
\label{nnr:sec:open}

Can \autoref{nnr:thm:partialsep} or \autoref{nnr:lem:lower} be quantitatively improved? Can it be witnessed by a composed function with a simpler gadget than the confounding gadget (and with an elementary proof that avoids the machinery of \cite{GLMWZ,Goos})?

What is the best separation between log of nonnegative rank and log of binary rank for total $0$-$1$ matrices? We know it is at least an additive constant and at most quadratic.

A number of so-called simulation theorems are known, which convert lower bounds for query complexity measures into lower bounds for the corresponding communication complexity measures (e.g., \cite{She,GLMWZ,GPW}). Unfortunately, no such simulation theorem is known for unambiguous nondeterminism; such a result could be useful for obtaining new binary rank lower bounds.

For total $0$-$1$ matrices $M,N$, let $M\otimes N$ denote the (Kronecker) tensor product, and let $M^{\otimes k}$ denote the $k$-fold tensor product of $M$ with itself. How do $\nnr$ and $\binr$ behave under the tensor product? This can be viewed as a direct sum question. We trivially have \[\max\bigl\{\nnr(M)\cdot\fool(N),~\fool(M)\cdot\nnr(N)\bigr\}~\le~\nnr(M\otimes N)~\le~\nnr(M)\cdot\nnr(N)\] and  \[\max\bigl\{\binr(M)\cdot\fool(N),~\fool(M)\cdot\binr(N)\bigr\}~\le~\binr(M\otimes N)~\le~\binr(M)\cdot\binr(N)\] where $\fool(M)$ denotes the largest size of a fooling set of $M$. It does not seem to be known whether there exists an $M$ and $k$ for which $\binr(M^{\otimes k})<\binr(M)^k$. It is known \cite{BKLT} that there exist total real nonnegative matrices $M,N$ such that $\nnr(M\otimes N)<\nnr(M)\cdot\nnr(N)$.

\begin{conjecture}
\label{nnr:conj:tensor}
For all $k$, $\binr(M^{\otimes k})=5^k$ where matrix $M$ is from the proof of \refmaintwo.
\end{conjecture}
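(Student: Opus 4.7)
The upper bound $\binr(M^{\otimes k}) \le 5^k$ is immediate: tensoring $k$ copies of a size-$5$ $\binr$ decomposition of $M$ (which exists by \refmaintwo) gives a size-$5^k$ decomposition of $M^{\otimes k}$. All the difficulty lies in the matching lower bound, which the paper flags as open, so what follows is a plan rather than a completed argument.

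My plan is induction on $k$, with base case $k=1$ from \refmaintwo. For the inductive step, let $\mathcal{P}$ be a partition of the $1$'s of $M^{\otimes k}$ into $r$ rectangles. Because $M^{\otimes k}_{(i,x'),(j,y')} = M_{i,j}\cdot M^{\otimes(k-1)}_{x',y'}$, fixing the first tensor factor to any $1$-entry $(i,j)$ of $M$ turns $\mathcal{P}$ into a $1$-partition of $M^{\otimes(k-1)}$ via the $(i,j)$-sections of its rectangles. By induction each such restriction contains at least $5^{k-1}$ nonempty pieces, and the hope is to aggregate this across the $18$ one-entries of $M$ and conclude $r \ge 5^k$ by averaging or double counting, reinforced by the structural constraints from the $k=1$ proof --- the three colored fooling sets tiling the top four rows and the asymmetric role of the all-$1$'s bottom row.

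The chief obstacle, and the reason this remains a conjecture, is that naive averaging is far too weak: a single rectangle $X\times Y$ of $M^{\otimes k}$ can contribute a nonempty $(i,j)$-section at up to $|\pi_1(X)|\cdot|\pi_1(Y)|$ different one-entries of $M$ (where $\pi_1$ projects onto the first tensor coordinate), so the rectangles of $\mathcal{P}$ get reused heavily across the $18$ restrictions and only $r \ge 5^{k-1}$ comes out for free. Any successful argument must therefore exploit the structure of $M$ itself. A concrete sub-goal would be to track, for each color pattern $\alpha\in\{\textup{r},\textup{g},\textup{b}\}^k$, how the rectangles of $\mathcal{P}$ distribute over the $4^k$ one-entries of the tensor-colored fooling set indexed by $\alpha$, generalizing the $k=1$ observation that each rectangle of a hypothetical size-$4$ partition is forced to take exactly one entry from each color class and therefore fill an entire row of the top four rows.

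I expect the decisive difficulty to be controlling non-product rectangles of $M^{\otimes k}$: a rectangle of $M^{\otimes k}$ need not be a tensor product of rectangles of $M$, and any savings below $5^k$ would have to flow through such ``twisted'' rectangles. This is the same obstacle that makes direct sum questions for $\binr$ hard in general; as the paper notes, no matrix $M'$ and $k\ge 2$ with $\binr(M'^{\otimes k}) < \binr(M')^k$ is currently known. Consequently a proof of \autoref{nnr:conj:tensor} will likely require a genuinely new binary-rank lower bound technique tailored to tensored matrices rather than a refinement of existing fooling-set or Technique-$2$-style counting.
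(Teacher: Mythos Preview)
The statement is a \emph{conjecture}, and the paper explicitly says ``We have been unable to make any significant progress on the conjecture.'' There is no proof in the paper to compare your proposal against.

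You correctly recognize this: you note that the lower bound is flagged as open and present a plan rather than a completed argument. Your identification of the obstacles --- that naive averaging over sections only recovers $r\ge 5^{k-1}$, that non-product rectangles of $M^{\otimes k}$ are the crux, and that this is essentially the direct-sum problem for $\binr$ --- matches the paper's own framing in \autoref{nnr:sec:open}, where it ties the question to the ``notoriously difficult direct sum problem for deterministic communication complexity'' and notes that no example with $\binr(M'^{\otimes k})<\binr(M')^k$ is known. The upper bound $\binr(M^{\otimes k})\le 5^k$ you state is indeed immediate from submultiplicativity.

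In short, your proposal is an honest and accurate assessment of the situation, not a proof, and the paper does not claim one either.
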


For this particular $M$, we have $\nnr(M^{\otimes k})=\fool(M^{\otimes k})=4^k$ for all $k\ge 1$. Hence under this conjecture, we would have a family of total matrices for which unambiguous nondeterminism requires $\log_4(5)\approx 1.16$ times more bits of communication than the corresponding sampling problem. We have been unable to make any significant progress on the conjecture.

Is there some other way (besides possibly tensor product) to amplify the gap and get a family of total $0$-$1$ matrices exhibiting some separation between nonnegative rank and binary rank? It is conceivable that a computer search could be used to find a better gap example (e.g., by confirming \autoref{nnr:conj:tensor} for $k=2$). However, we feel that such an enterprise would be unenlightening and unlikely to lead to any proof techniques for obtaining a superlinear gap between nonnegative rank and binary rank.

A measure sandwiched between $\nnr$ and $\binr$ is the number of rectangles needed to uniformly cover the $1$'s of a matrix (so all $1$'s are covered by the same number of rectangles). How does this measure relate to $\nnr$ and $\binr$? The upper bound in \autoref{nnr:thm:partialsep} also works for this intermediate measure, but the upper bounds in \refmaintwo and \refmainthree do not.

%%%%%%%%%%%%%%%%%%%%%%%%%%%%%%%%%%%%%%%%%%%%%%%%%%%%%%%%%%%%%%%%%%%%%%%%%%%%%%
%%%%%%%%%%%%%%%%%%%%%%%%%%%%%%%%%%%%%%%%%%%%%%%%%%%%%%%%%%%%%%%%%%%%%%%%%%%%%%

\subsection*{Acknowledgments}

I thank Mika G\"o\"os, Troy Lee, and Toniann Pitassi for discussions and anonymous reviewers for comments.

%%%%%%%%%%%%%%%%%%%%%%%%%%%%%%%%%%%%%%%%%%%%%%%%%%%%%%%%%%%%%%%%%%%%%%%%%%%%%%
%%%%%%%%%%%%%%%%%%%%%%%%%%%%%%%%%%%%%%%%%%%%%%%%%%%%%%%%%%%%%%%%%%%%%%%%%%%%%%

\bibliographystyle{alpha}
\bibliography{nnr}

\end{document}